\documentclass[12pt]{article}

\usepackage[cp1251]{inputenc}

\usepackage{amsfonts}
 \usepackage{amsthm}
 \usepackage{amsmath}
 \usepackage{amscd}
 \usepackage{amssymb}

\usepackage[english]{babel}
\usepackage{tikz-cd}

 \usepackage{graphics}
\usepackage{graphicx}

 \numberwithin{equation}{section}
 \newtheorem{theorem}{Theorem}
 \newtheorem{lm}{Lemma}
 
 \theoremstyle{definition}

\title{Semi-global symplectic invariant of the champagne bottle}
\author{Ognyan Christov\\
Faculty of Mathematics and Informatics, \\
Sofia University "St. Kliment Ohridski", \\
5 J. Bouchier blvd., 1164 Sofia, Bulgaria}
\date{}

\begin{document}

 \maketitle

\begin{abstract}

\noindent We study  a two degrees of freedom Hamiltonian system
describing the motion of a particle in a potential field of the
form of $S^1$ symmetric double well, namely $V = - (x_1^2 + x_2^2)
+ (x_1^2 + x_2^2)^2$, known also as a champagne bottle potential.
 This system is completely integrable. The
champagne bottle is the simplest member of a class of integrable
systems that have no global action variables due to a non-trivial
monodromy, Bates (1991). Beyond that, the geometric and dynamical
properties of the system  near the equilibrium are of primary
interest. We calculate the Birkhoff normal form and the nontrivial
action near the focus-focus singularity and obtain the semi-global
symplectic invariant near focus-focus point, which is introduced
by V\~{u} Ng\d{o}c (2003). Examples of such calculations are
still few. We compare our result with the semi-global
symplectic invariant of the spherical pendulum, calculated by Dullin (2013).
\end{abstract}

{\bf Keywords:} semi-global symplectic invariant, champagne
bottle, focus-focus singularity, Birkhoff normal form, actions

\section{Introduction}

In this paper we deal with the Hamiltonian
\begin{equation}
\label{1.1}
H = \frac{1}{2} (y_1 ^2 + y_2^2) - (x_1^2 + x_2^2) + (x_1^2 + x_2^2)^2,
\end{equation}
defined on the phase space $M^4 = \{ (x_1, x_2, y_1,y_2) \in
\mathbb{R}^4 \}$  endowed with the standard symplectic form
$\Omega = d x_1 \wedge d y_1 + d x_2 \wedge d y_2$ and  the
standard Poisson structure $\{x_i, y_j \} = \delta_{i,j}$. The
Hamiltonian system associated with $H$
\begin{eqnarray}
\label{1.2}
& \dot{x}_1 = y_1, \quad & \dot{y}_1 = 2 x_1 [1 - 2 (x_1 ^2 + x_2^2)],  \\
& \dot{x}_2 = y_2, \quad & \dot{y}_2 = 2 x_2 [1 - 2 (x_1 ^2 +
x_2^2)] \nonumber
\end{eqnarray}
is completely integrable due to the $S^1$ symmetry. The second
integral (the angular momentum) reads
\begin{equation}
\label{1.3}
J_2 = x_1 y_2 - x_2 y_1 .
\end{equation}
This system describes the motion of a particle in a champagne
bottle potential, also referred as  a particle in a Mexican hat
potential. Such potentials are of interest to field theorists
studying the Higgs field. As can be seen, this system does not
depend on any parameters. This makes it the simplest member of a
class of systems that exhibit Hamiltonian monodromy, which is an
obstruction for the existence of global action variables Bates
\cite{B}. It turns out that this topological feature causes a
sharp change in the disposition of the quantum-mechanical spectrum
near the zero energy, see  Child \cite{Child}.

We are interested in the geometry and the dynamical properties of
the system (\ref{1.2}) near the equilibrium $(0, 0)$, and mainly
in the computation of the semi-global symplectic invariant near
this point, which turns out to be a focus-focus point.

In the pursuit of solving the classification problem of integrable
systems with singularities V\~{u} Ng\d{o}c \cite{Ngoc} introduced
semi-global symplectic invariants for the Liouville foliation of
integrable Hamiltonian systems with two degrees of freedom near
the separatrix of a simple focus-focus point. This result is an
extension of the invariants introduced by Dufour et al \cite{DMT}
for one degree of freedom systems with simple hyperbolic
singularity.

Prior to that topological and smooth invariants are introduced,
see \cite{BoFo,Zung0} and also the survey article \cite{BoOsh}.

Later, Dullin and V\~{u} Ng\d{o}c \cite{DSVN} introduced the
semi-global symplectic invariants for hyperbolic-hyperbolic
equilibrium for two degrees of freedom systems and computed this
invariant for the Neumann system.

In our treatment of the considered system we follow closely the
approach taken by Dullin \cite{Dullin} in his study of semi-global
symplectic invariants for the spherical pendulum. It is also worth
mentioning the work of Alonso et all \cite{ADH} in which some
invariants for the spin-oscillator Hamiltonian integrable system
(important example in the theory of semi-toric systems) are
calculated, see also \cite{A,BD}.

The investigated here Hamiltonian system has a simple focus-focus
point and it is genera\-li\-zed semi-toric system, since the
angular momentum integral fails to be proper (the proof of that
fact is exactly the same as in Pelayo et all \cite{PRNg}).

The paper is organized as follows. All necessary concepts and
results are recalled in Section 2 In Section 3 the Birkhoff normal
form near the focus-focus point is calculated. In Section 4 we
compute the expansion of the non-trivial action integral near
focus-focus singularity. Then we find the semi-global symplectic
invariant as well as the other characteristics as the period, the rotation number and the twist
in Section 5. We recover  some results about the KAM conditions for this system.
At the end, we compare the semi-global symplectic invariant of the champagne bottle
with that of the spherical pendulum and conclude that
the spherical pendulum and the champagne bottle are not symplectically equivalent.

\section{Theoretical background}

In this section we recall well known results about integrable
Hamiltonian systems. Some of them are valid for systems with
arbitrary number of degrees of freedom, some of them are valid
only for the systems with two degrees of freedom and are in a
process of generalization (see, for instance,
\cite{AbMa,Arnold,BoFo}).

A Hamiltonian system related to a Hamiltonian $H$, defined on a
symplectic manifold $(M^{2n}, \Omega)$ is called completely
integrable if there exist $n$ first integrals $f_1 = H, f_2,
\ldots, f_n$, functionally independent almost everywhere and
Poisson commuting, that is, $\{f_i, f_j\} = 0$ for all $i, j$.
These first integrals define the momentum map $F:= (f_1, f_2,
\ldots, f_n) : M^{2n} \to \mathbb{R}^n$.

A point $m \in M^{2n}$ is called a regular point if $rank DF = n$
and a singular point if $rank DF < n$. A value $a \in
\mathbb{R}^n$ is called regular if $F^{-1} (a)$ consists only of
regular points and critical otherwise. Equivalently independency
of the integrals almost everywhere means that $rank DF = n$ almost
everywhere.

Let $m \in M^{2n}$ be an equilibrium point such that $d f_1 (m) =
d f_2 (m) = \ldots = d f_n (m)$ ($rank DF = 0$). Consider $A_j =
(\Omega)^{-1} d^2 f_j (m)$ as a linear operator: $A_j \in sp
(T_m M, \Omega)$. Since $f_1, f_2, \ldots, f_n$ commute, $A_1,
A_2, \ldots, A_n$ generate a commutative subalgebra $K$ in $sp
(T_m M, \Omega)$. The point $m \in M^{2n}$ is {\it
non-degenerate}, if $K$ is a Cartan subalgebra.

The map $F$ defines a singular Liouville fibration on $M^{2n}$,
whose fibers are connected components of $F^{-1} (c), c \in
\mathbb{R}^n$ which we assume to be compact. A fibre $F^{-1} (c)$
is called regular if $c$ is a regular value and singular if it
contains at least one singular point.

{\bf Remark 1.} Clearly, for a given integrable Hamiltonian $H$,
the moment map is not unique. However, under certain non-resonance
condition (see the end of this section), the regular level sets of
the moment map are uniquely determined by the system.

If $c$ is a regular value of $F$, we have near any compact
connected component $\Lambda_c$ of $F^{-1} (c)$ the
Liouville-Arnold theorem \cite{AbMa,Arnold} which says that
$\Lambda_c$ is diffeomorphic to an $n$-dimensional torus
$\mathbb{T}^n$. The neighborhood of this torus is symplectomorphic
to the standard model $ \mathbb{T}^n \times D^n$ with $\Omega =
\sum d \varphi_j \wedge d I_j$, where ${\boldsymbol{\varphi}} =
(\varphi_1, \varphi_2, \ldots, \varphi_n)$ are coordinates on
$\mathbb{T}^n$ called {\it angles} and $\mathbf{I} = (I_1, I_2,
\ldots, I_n)$ are coordinates on $D^n$ called {\it actions}. The
dynamics is governed by extremely simple equations
\begin{equation}
\label{he}
\dot{\mathbf{I}} = 0, \quad \dot{\boldsymbol{\varphi}}
= \frac{\partial H}{\partial \mathbf{I}}, \quad H = H
(\mathbf{I}).
\end{equation}
The actions can be computed as follows
\begin{equation}
\label{actionvar}
I_j = \frac{1}{2 \pi} \int_{\gamma_j} \delta, \qquad j = 1, \ldots, n,
\end{equation}
where $\delta$ is the Liouville 1-form $d \delta = \Omega$ and
$\gamma_j$ are cycles on $\mathbb{T}^n$.

In general, the actions are not globally defined due to some
geometric obstructions as the Hamiltonian monodromy, see
Duistermaat \cite{Dui}. The considered in this paper system is an
example of an integrable system without global actions.

The action variables are the most natural symplectic invariants of
integrable systems. In other words, the Liouville-Arnold theorem
gives a classification of integrable systems (up to the dimension)
locally near regular fibres and all such systems look the same.

However, the studied Hamiltonian systems do have singularities.
Therefore, the dynamics near singular fibers have to be studied to
find distinctions between integrable systems. In this way a
general problem arises: to describe invariants and to classify
singular Liouville fibrations.

We mentioned some works about the different type of
classifications of non-degenerate singularities in the
Introduction. Now, we give a more detailed description of the
symplectic invariant for focus-focus singularities. To do that, we
will consider integrable Hamiltonian systems with two degrees of
freedom, namely there are two independent, commuting first
integrals $f_1$ and $f_2$.

The focus-focus singularities are one of the four types of
singularities of Morse-Bott type in dimension 4 \cite{Eliasson,Will}.
A fixed point $m \in M^4$ ($d f_1 (m) = d f_2 (m) = 0$) is called
focus-focus point if there are symplectic coordinates $(q_i, p_i)$
near $m$ (that is, $\Omega = d q_1 \wedge d p_1 + d q_2 \wedge d
p_2$) such that
$$
f_1 = a (q_1 p_1 + q_2 p_2) + b (q_1 p_2 - q_2 p_1) + h.o.t. ,
$$
$$
f_2 = c (q_1 p_1 + q_2 p_2) + d (q_1 p_2 - q_2 p_1) + h.o.t. ,
$$
where $a, b, c, d$ are constants satisfying $a d - b c \neq 0$.
Equivalently, the focus-focus equilibrium has eigenvalues $\pm \nu
\pm i \omega$, where $\nu, \omega$ are non-zero real numbers.
Such eigenvalues are also called loxodromic.

Observe that, at least for the systems with two degrees of
freedom, the focus-focus singularities and non-trivial monodromy
are connected \cite{B,Dullin,Dui,Zung2}

Further, it is also assumed that the singular Liouville fibration
has a unique singular point $m$, which is of focus-focus type and
the fiber containing it is compact -- simple focus-focus
singularity.

The fibration is defined by the two integrals $f_1$ and $f_2$.
They can be chosen in such a way that the actions are
\begin{align}
I_1 &= \mathrm{Re} ((f_1 + i f_2) \ln (f_1 + i f_2) ) + c (f_1, f_2), \nonumber \\
I_2 &= f_2,
\end{align}
where $c (f_1, f_2)$ is smooth.

\begin{theorem}
\label{Th1}
 (San V\~{u} Ng\d{o}c  \cite{Ngoc}, 2003) The Taylor expansion of $c (f_1, f_2)$ is
a symplectic invariant of the simple focus-focus singularity.
\end{theorem}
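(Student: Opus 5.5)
The plan follows V\~{u} Ng\d{o}c's original strategy: construct the invariant from a normal form near the focus-focus point and a period computation along the singular fibre, then check that the only freedom in the construction leaves the Taylor series of $c$ unchanged.

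First, I invoke Eliasson's theorem for the smooth symplectic normal form near $m$. It provides symplectic coordinates $(q,p)$ near $m$ and a local diffeomorphism $G$ of $(\mathbb{R}^2,0)$ such that $F=(f_1,f_2) = G(q_1p_1+q_2p_2,\ q_1p_2-q_2p_1)$. After composing with $G^{-1}$ we may assume $f_1, f_2$ are exactly these quadratic functions near $m$. The map $G$ is not unique, but its Taylor series at $0$ is unique up to the finite group of symmetries of the focus-focus model (a sign flip, and orientation in $f_2$), which we fix by normalization. Then $f_2$ generates an $S^1$-action, and the action variable $I_2=f_2$ is canonical.

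Second, I compute the second action. Take a regular value $c=f_1+if_2$ near $0$ and the torus $\Lambda_c$. The $f_2$-flow is periodic with period $2\pi$. Starting from a point of $\Lambda_c$ near $m$, follow the flow of $f_1$ for time $\tau_1(c)$ and of $f_2$ for time $\tau_2(c)$ until the orbit first returns to the starting $S^1$-orbit. Split the return map into a local passage inside the Eliasson chart and a regular passage along the separatrix away from $m$. In the chart the passage time is computed explicitly from $z=q_1+iq_2$, $w=p_1+ip_2$ with $\bar z w = c$: the time is $-\mathrm{Re}\ln c$ and the angle is $-\mathrm{Im}\ln c$, up to a smooth remainder. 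The regular passage contributes a smooth function of $c$. Hence $\tau_1+i\tau_2 = -\ln c + \sigma(c)$ with $\sigma$ smooth. The action $I_1$ satisfies $\partial I_1/\partial f_j=\tau_j$ up to sign conventions, since $dI_1 = \tau_1 df_1+\tau_2 df_2$. Integrating gives $I_1 = \mathrm{Re}(c\ln c - c) + \tilde c(f_1,f_2)$ with $\tilde c$ smooth, which is the stated form with $d\tilde c = \mathrm{Re}(\sigma)\,df_1 + \mathrm{Im}(\sigma)\,df_2$ after absorbing the linear term.

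Third, I show invariance. Suppose $\phi$ is a symplectomorphism between neighbourhoods of two simple focus-focus fibres preserving the Liouville foliations. Then $\phi$ carries the first system's integrals to functions of the second's. By the uniqueness part of Eliasson's normal form, the induced map on momentum values has Taylor series at $0$ equal to the identity, modulo the discrete symmetries fixed above. Two ambiguities remain. The integration constant of $I_1$ is fixed by requiring $\tilde c(0)=0$. The choice of cycle defining $I_1$ is determined modulo $I_2$ and multiples of it; this is exactly the monodromy ambiguity, and it is removed by the logarithmic normalization, which singles out the cycle passing through $m$ in the limit. Since $\phi$ preserves actions and $\mathrm{Re}(c\ln c)$ is canonical, $\tilde c$ is preserved up to a flat function. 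Hence its Taylor series is invariant.

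The main obstacle is the third step: making precise that the flat (non-unique) part of $G$ only changes $\tilde c$ by a flat function. My approach is to write $G = \mathrm{id} + \text{flat}$ and observe that replacing $c$ by $c+\text{flat}$ inside $\mathrm{Re}(c\ln c)$ changes it by a flat smooth function. This holds because $c\ln c$ is $C^0$ with only logarithmic growth in its derivatives, and flat functions absorb any power of $\ln|c|$.
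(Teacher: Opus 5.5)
The paper gives no proof of this theorem; it is quoted from V\~{u} Ng\d{o}c \cite{Ngoc}. Your outline is essentially his argument: Eliasson normalization, return data $(\tau_1,\tau_2)$ with a logarithmic singularity plus a smooth remainder $\sigma$, closedness of $\sigma_1\,dc_1+\sigma_2\,dc_2$, and invariance because the induced base map is $(\epsilon_1c_1,\epsilon_2c_2)$ up to flat terms. Steps one and two are sound in substance.

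\textbf{The gap.} It is in step three, where you say the cycle ambiguity is ``removed by the logarithmic normalization, which singles out the cycle passing through $m$.'' It is not removed.

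Work in the normalization your step two actually produces. With $I_j=\frac{1}{2\pi}\oint\alpha$ one has $2\pi\,dI_1=\tau_1\,df_1+\tau_2\,df_2$, hence $2\pi I_1=-\mathrm{Re}(c\ln c-c)+S$ as in (\ref{5.1}). Let $\delta_c$ be the $S^1$-orbit, i.e.\ the vanishing cycle.
\begin{itemize}
\item Replacing $\gamma_c$ by $\gamma_c+k\delta_c$ turns $S$ into $S+2\pi k\,c_2$. The new action has exactly the same form, so the logarithmic term cannot distinguish the choices.
\item Nothing is singled out geometrically either. Since $\delta_c$ collapses onto $m$, every $\gamma_c+k\delta_c$ tends to a loop through $m$.
\item $\mathrm{Re}(c\ln c)=c_1\ln|c|-c_2\arg c$ is itself not single-valued. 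Its determinations differ by $2\pi k\,c_2$, which is exactly the monodromy $\gamma_c\mapsto\gamma_c\pm\delta_c$. So ``$\mathrm{Re}(c\ln c)$ is canonical'' presupposes a choice of determination, and changing it is the same as changing $k$.
\end{itemize}
As written, your argument proves invariance of the Taylor series only modulo $2\pi\mathbb{Z}\,c_2$.

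\textbf{The fix.} You need an explicit normalization, as V\~{u} Ng\d{o}c uses. Regard $\tau_2\in\mathbb{R}/2\pi\mathbb{Z}$ and set $\sigma_2=\tau_2-\arg c$; this is single-valued and smooth when both are continued together. Then require $\sigma_2(0)=\partial_{c_2}S(0)\in[0,2\pi)$. The paper's informal statement suppresses this normalization, so your proof has to supply it.

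The discrete symmetries need the same care. They cannot be ``fixed by normalization'' one system at a time, because an equivalence $\phi$ may reverse the $S^1$-orientation or swap the stable and unstable directions. Either restrict to equivalences preserving these orientations, or treat the invariant as an orbit of the induced $(\mathbb{Z}/2)^2$-action on formal series.

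\textbf{Two smaller points.}
\begin{itemize}
\item Your formula $\tau_1+i\tau_2=-\ln c+\sigma$ is incompatible with $\tau_1\,dc_1+\tau_2\,dc_2$ being closed, since $-\ln|c|\,dc_1-\arg c\,dc_2$ is not closed. The model computation (entry section $w=\delta$, exit section $z=\delta$) gives $\tau_1=-\ln|c|+\sigma_1$ and $\tau_2=\arg c+\sigma_2$.
\item In your flatness argument, the derivatives of $c\ln c$ of order $k\geq 2$ grow like $|c|^{1-k}$, not logarithmically. The conclusion still holds. A flat perturbation is $O(|c|^N)$ with all its derivatives for every $N$. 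The segment from $c$ to the perturbed point stays at distance comparable to $|c|$ from $0$, inside a single determination of $\ln$.
\end{itemize}
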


We will make the expression of $c (f_1, f_2)$ more precise in
Section 5. It is seen that the  symplectic invariant is closely
related to the actions as in the regular case. Moreover, one can
consider $c (f_1, f_2)$ as the regularized part of the action.

As an important result, the introduced invariant classifies the
singular Liouville fibration in an open vicinity of the singular
fibre, up to a symplectomorphism.

We finish this section with some remarks about the KAM theory
conditions for the champagne bottle. In 1995 Georgiev \cite{GG}
showed that the Kolmogorov condition \cite{Arnold}
$$
\det (\partial^2 H/\partial I_i \partial I_j) \neq 0,
$$
where $H$ is an integrable Hamiltonian and $I_j$ are the action
variables, is fulfilled everywhere out of the bifurcation diagram
of the momentum map using the complex-analytic approach developed
by Horozov \cite{EH1,EH2}. In general, if the Hamiltonian of a given
integrable system $H$ satisfies the Kolmogorov condition (almost
everywhere), then $H$ is non-resonant.

Georgiev also showed that the
iso-energetic non-degeneracy condition (twist condition)
$$
\det
\begin{pmatrix}
\frac{\partial^2 H}{\partial I^2} & \frac{\partial H}{\partial I} \\[1ex]
\frac{\partial H}{\partial I} & 0
 \end{pmatrix} \neq 0
$$
is violated along the curve through the focus-focus point (see the dashed curve in Fig. 1).

A year later, Zung \cite{Zung1} proved that for an integrable two
degrees of freedom Hamiltonian system with a simple focus-focus
value, the Kolmogorov condition is always satisfied in a
neighborhood of the focus-focus singularity.

Several years later, Dullin and  V\~{u} Ng\d{o}c \cite{DSVN} have
improved the Zung's result using the theory developed in
\cite{Ngoc} and described above. Their results show that in a neighborhood of  a simple focus-focus point,
the Arnold-Moser condition is violated while the Kolmogorov condition is satisfied.

At almost same time, Rink \cite{Rink} presented a new proof of the Zung's result with a
quite similar to \cite{DSVN} technique.

\section{Normal Form}

In this section we will compute the Birkhoff normal form for the
Hamiltonian (\ref{1.1}) near the equilibrium. The  Birkhoff normal
form is a simplified Hamiltonian and that simplification is
achieved by series of near-identity canonical transformations
obtained by a generating function. For this purpose, the Lie
series approach developed by Deprit \cite{Deprit} will be used,
which we will recall at the beginning briefly (see for full
details also \cite{MeyerHall}).

Suppose without loss of generality that the equilibrium is $(0,
0)$. The scaling $(x, H) \to (\varepsilon x, \varepsilon^{-2} H)$
"zooms in" to the equilibrium. Here $\varepsilon$ is a formal
parameter, which we can set $\varepsilon=1$ eventually.

We are given a Hamiltonian in the form
\begin{equation}
\label{2.1}
H (x, \varepsilon) = \sum_{i=0} ^{\infty} \frac{\varepsilon^i}{i!} H_i ^0 (x)
\end{equation}
Via symplectic, near-identity change of variables $x \to y$, which
does not transforms $\varepsilon$, the above Hamiltonian becomes
\begin{equation}
\label{2.2}
\overline{H} (y, \varepsilon) = \sum_{i=0} ^{\infty} \frac{\varepsilon^i}{i!} H_0 ^i (y)
\end{equation}
Usually $\overline{H}$ is called the Lie transform of $H$,
generated by $W$
\begin{equation}
\label{2.3}
W (x, \varepsilon) = \sum_{i=0} ^{\infty} \frac{\varepsilon^i}{i!} W_{i+1} (x)
\end{equation}
The new Hamiltonian has to be simpler than the original, so it is
said to be in normal form. Notice that this normal form need not
to be convergent.

To facilitate the calculations the intermediate Hamiltonians
$\{H_j ^i \}$, $i = 1, 2, \ldots, j = 0, 1, \ldots$ are introduced
by the recursive identities
\begin{equation}
\label{2.4}
H_j ^i = H_{j+1} ^{i-1} + \sum_{k=0} ^j  \binom{j}{k} \{H_{j-k} ^{i-1}, W_{k+1} \}
\end{equation}
The algorithm starts with a given Hamiltonian, that is, all $H_i
^0$ are known (we follow closely \cite{MeyerHall}). Assume all the
rows of the Lie-Deprit triangle

\[
  \begin{tikzcd}
H_0 ^0  \arrow{d}    \\
H_1 ^0  \arrow{d} \arrow{r} & H_0 ^1 \arrow{d} \\
H_2 ^0  \arrow{d} \arrow{r} & H_1 ^1 \arrow{d} \arrow{r} & H_0 ^2 \arrow{d} \\
H_3 ^0  \arrow{d} \arrow{r} & H_2 ^1 \arrow{d} \arrow{r} & H_1 ^2 \arrow{d} \arrow{r} & H_0 ^3 \arrow{d}\\
   {}  &  {}  & {} & {}
  \end{tikzcd}
\]

\vspace{2ex}

\noindent have been computed down to the $(N-1)$st row, thus $W_1,
\ldots, W_{N-1}$ have been determined and the desired members of
the normal form $H_0 ^1, \ldots, H_0^{N-1}$ are obtained. To
compute the $N$th row we perform the following steps:

\vspace{1ex}

Step 1. Assume $W_N = 0$ and compute all the terms of $N$th row
using (\ref{2.4}) and denote them by $L_j ^i, i+j=N$.

\vspace{1ex}

Step 2. Solve the equation $H_0 ^N = L_0 ^N + \{H_0 ^0, W_N\}$ for
$W_N$ and $H_0 ^N$, so that $H_0 ^N$ is in normal form.

\vspace{1ex}

Step 3. Calculate $H_j ^i = L_j ^i + \{H_0 ^0, W_N\}$ for all $i +
j = N$.

\vspace{1ex}

Step 4. Repeat for the next row.

\vspace{1ex}

The definition of normal form depends on the equation $H_0 ^N =
L_0 ^N + \{H_0 ^0, W_N\}$, which in turn depends on $H_0 ^0$. This
equation is called the homological or the Lie equation.

This algorithm can be used to establish the following

\begin{theorem} (Theorem 10.3.1, \cite{MeyerHall})
\label{Th2} Let $\{\mathcal{P}_i \}_{i=0} ^{\infty}$,
$\{\mathcal{Q}_i \}_{i=1} ^{\infty}$ and $\{\mathcal{R}_i \}_{i=1}
^{\infty}$ be sequences of linear spaces of smooth functions
defined on a common domain $O$ in $\mathbb{R}^n$ with the
following properties:

1. $\mathcal{Q}_i \subset \mathcal{P}_i$, \quad $i = 1, 2, \ldots$

2. $H_i ^0 \in \mathcal{P}_i$, \quad $i = 0, 1, 2, \ldots$

3. $\{\mathcal{P}_i, \mathcal{R}_j \} \subset \mathcal{P}_{i+j}$,
\quad $i+j = 1, 2, \ldots$

4. for any $D \in \mathcal{P}_i, i = 1, 2, \ldots$, there exists
$B \in \mathcal{Q}_i$ and $C \in \mathcal{R}_i$, such that
\begin{equation}
\label{2.5} B = D + \{H_0 ^0, C \}.
\end{equation}
Then, there exists a function $W$ of the form (\ref{2.3}) with
$W_i \in \mathcal{R}_i, i = 1, 2, \ldots$, which generates a
near-identity symplectic change of variables $x \to y$ such that
the Hamiltonian in the new variables has a series expansions given
by (\ref{2.2}) with $H_0 ^i \in \mathcal{Q}_i, i = 1, 2, \ldots$.
\end{theorem}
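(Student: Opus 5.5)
The proof is an induction on the rows of the Lie--Deprit triangle, using the algorithm described above with the homological equation solved at each step by hypothesis 4. The induction hypothesis at stage $N$ is this: $W_1,\ldots,W_{N-1}$ have been chosen with $W_k\in\mathcal{R}_k$, the normal form terms satisfy $H_0^k\in\mathcal{Q}_k$ for $k\le N-1$, and every computed intermediate term satisfies $H_j^i\in\mathcal{P}_{i+j}$. For $N=0$ only the first column is involved, and there $H_i^0\in\mathcal{P}_i$ by hypothesis 2. Once all rows are built, $W$ is defined by (\ref{2.3}), and the standard Lie--Deprit theorem identifies the Lie transform of $H$ generated by $W$ with the series $\sum \frac{\varepsilon^i}{i!}H_0^i$. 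That identification is the content of the recursion (\ref{2.4}), and I take it as known from the construction recalled above.

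For the inductive step I follow Steps 1--3.

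\emph{Step 1.} Set $W_N=0$ and compute the entries $L_j^i$, $i+j=N$, of the $N$th row from (\ref{2.4}). Every bracket that appears is either $\{H^{i-1}_{j-k},W_{k+1}\}$ with $k+1\le N-1$, or a term involving $W_N=0$. By the induction hypothesis $H^{i-1}_{j-k}\in\mathcal{P}_{i-1+j-k}$ and $W_{k+1}\in\mathcal{R}_{k+1}$. Hypothesis 3 then puts the bracket in $\mathcal{P}_{i+j}=\mathcal{P}_N$. The entry $H^{i-1}_{j+1}$ also lies in $\mathcal{P}_N$. Since $\mathcal{P}_N$ is a linear space, each $L_j^i\in\mathcal{P}_N$.

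\emph{Step 2.} Apply hypothesis 4 to $D=L_0^N$. This gives $B\in\mathcal{Q}_N$ and $C\in\mathcal{R}_N$ with $B=L_0^N+\{H_0^0,C\}$. Set $W_N=C$ and $H_0^N=B$.

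\emph{Step 3.} Because (\ref{2.4}) is linear in $W_N$, and $W_N$ enters the $N$th row only through the term $\{H^{i-1}_0,W_N\}$ with $j=i-1$ and $k=j$, each entry of the row is $L_j^i$ plus a single bracket $\{H^0_0,W_N\}$. Hypothesis 3 with $i=0$, $j=N$ puts this bracket in $\mathcal{P}_N$. So the whole row stays in $\mathcal{P}_N$ and the induction closes.

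The main obstacle is Step 3: making sure that $W_N$ really enters only through brackets with $H_0^0$. This requires checking carefully which index combinations in (\ref{2.4}) reach $W_N$ along a given row $i+j=N$. Writing the recursion out shows that $k+1=N$ forces $j=N-1$ and $i=1$, so the correction is exactly $\{H_0^0,W_N\}$ in $H^1_{N-1}$. That correction then propagates to the later entries of the row through the recursion, and hypothesis 3 keeps all of them in $\mathcal{P}_N$.
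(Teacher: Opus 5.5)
Your proof is correct and follows the same route the paper relies on: the row-by-row induction on the Lie--Deprit triangle behind Steps 1--4, as in Meyer--Hall. In that induction hypothesis 3 keeps each row in $\mathcal{P}_N$, hypothesis 4 solves the homological equation, and the single correction $\{H_0^0,W_N\}$ enters at $H^1_{N-1}$ and is passed unchanged along the row.

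Two small repairs are needed. In Step 3 the index condition should read $i=1$, $j=N-1$, not $j=i-1$, as your last paragraph correctly says. In Step 1 the claim $H^{i-1}_{j+1}\in\mathcal{P}_N$ needs an explicit inner induction along the row, starting from $H^0_N\in\mathcal{P}_N$ (hypothesis 2).
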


\noindent For the uniqueness of the normal form one needs two more
conditions (Theorem 10.3.3 \cite{MeyerHall}):
\begin{itemize}

\item the linear operator $\mathcal{H}_i = \{H_0 ^0, .\}: \mathcal{P}_i \to \mathcal{P}_i$ to be simple, that is,
$$
\mathcal{P}_i = {\rm kernel} (\mathcal{H}_i) \oplus {\rm range} (\mathcal{H}_i)
$$
and
\item
$ \{\mathcal{Q}_i, \mathcal{Q}_j \} = 0, \quad i, j = 1, 2, \ldots . $
\end{itemize}

Returning to our system, we define the momentum mapping $F = (H, J_2)$ on some open subset $V$ of $M^4$ with values in $\mathbb{R}^2$
\begin{equation}
\label{2.55}
F : V \to (h, j_2) \in \mathbb{R}^2 .
\end{equation}

Next, we find that the equilibrium points of (\ref{1.2}) are $(0,
0)$ and $y_1 = y_2 = 0, x_1 ^2 + x_2 ^2 = \frac{1}{2}$. We are
interested mainly in the dynamics around  $(0, 0)$. The
linearization around $(0, 0)$ gives eigenvalues $\pm \sqrt{2}$
with multiplicity two. Therefore, this equilibrium is a degenerate
saddle-saddle point. On the other hand, from the perspective of
the foliation of the integrable system, it is a focus-focus point,
see Zung \cite{Zung2}. Clearly, any function of the first
integrals $H$ and $J_2$ is a first integral, and hence, has the
same foliation. Then, the linear combination $H + \omega J_2$ for
arbitrary $\omega \neq 0$ has an equilibrium of focus-focus type with
eigenvalues $\pm \sqrt{2} \pm i \omega$.

Further, we want to transform the quadratic part of the
Hamiltonian (\ref{1.1}) into Williamson normal form \cite{Will}.
The symplectic rescaling (preserves $\Omega$)
\begin{equation}
\label{2.6}
x_{1,2} \to \frac{1}{\sqrt[4]{2}} x_{1,2}, \qquad y_{1,2} \to \sqrt[4]{2} y_{1,2}
\end{equation}
brings (\ref{1.1}) to the form
\begin{equation}
\label{2.7}
H = \frac{\sqrt{2}}{2}\left( y_1^2 - x_1 ^2 \right) +
\frac{\sqrt{2}}{2}\left( y_2^2 - x_2 ^2 \right) + \frac{1}{2}
(x_1^2 + x_2^2)^2.
\end{equation}
Further, we make another symplectic change
\begin{equation}
\label{2.8}
x_j = \frac{1}{\sqrt{2}} (q_j - p_j) , \qquad y_j = \frac{1}{\sqrt{2}} (q_j + p_j), \qquad j = 1, 2.
\end{equation}
The symplectic form becomes
\begin{equation}
\label{2.9}
\Omega = d q_1 \wedge d p_1 + d q_2 \wedge d p_2 .
\end{equation}
Denote also
\begin{equation}
\label{2.10}
J_1 = p_1 q_1 + p_2 q_2.
\end{equation}
In these coordinates the second integral  reads $J_2 = q_1 p_2 -
q_2 p_1$ and the Hamiltonian (\ref{2.2}) is
\begin{align}
\label{2.11}
H  = &  H_2 + H_4,  \\
& H_2 = \sqrt{2} J_1, \quad H_4 = \frac{1}{8} \left( q_1^2 + q_2^2
+ p_1^2 + p_2^2 - 2 J_1 \right)^2. \nonumber
\end{align}

Scaling $(q, p, H) \to (\varepsilon q, \varepsilon p,
\frac{H}{\varepsilon^2})$, where $\varepsilon$ is a formal small
parameter which we can set $\varepsilon = 1$ at the end. Then the
Hamiltonian reads
\begin{align}
\label{2.12}
H = & \, H_0 ^0 + \frac{\varepsilon^2}{2} H_2 ^0, \qquad H_1 ^0 \equiv H_j ^0 \equiv 0, \quad j > 2, \\
    & H_0 ^0 = \sqrt{2} J_1, \qquad H_2 ^0 = \frac{1}{4} (q_1^2 + q_2^2 + p_1^2 + p_2^2 - 2 J_1)^2 \nonumber
\end{align}
Following Dullin \cite{Dullin} we introduce "almost action-angle"
coordinates by
\begin{align}
\label{2.13}
q_1 &= e^{\theta_1} \cos \theta_2, &p_1 = (J_1 \cos \theta_2 - J_2 \sin \theta_2)e^{-\theta_1} , \\
q_2 &= e^{\theta_1} \sin \theta_2, &p_2 = (J_1 \sin \theta_2 + J_2
\cos \theta_2)e^{-\theta_1}. \nonumber
\end{align}
Since
\begin{equation}
\label{2.14}
\theta_1 = \ln \sqrt{q_1^2 + q_2^2}, \qquad \theta_2 = \arctan \frac{q_2}{q_1}
\end{equation}
these coordinates are symplectic when $(q_1, q_2) \neq (0, 0)$,
$$
\Omega = d \theta_1 \wedge d J_1 + d \theta_2 \wedge d J_2, \qquad \{\theta_i, J_j \} = \delta_{ij}
$$
and only $\theta_2$ is an angle.

With these variables $H_2 ^0$ becomes
\begin{equation}
\label{2.15}
H_2 ^0 = \frac{1}{2} (3 J_1^2 + J_2^2) +
\frac{1}{4}\left(e^{4\theta_1} + e^{-4\theta_1}(J_1^2 + J_2^2)^2
-4 J_1 e^{2 \theta_1} - 4 J_1 (J_1^2 + J_2^2) e^{-2 \theta_1}
\right).
\end{equation}
Notice that by construction $H_0 ^i$ are functions of $(J_1,
J_2)$, since $J_1, J_2$ generate the algebra of functions that
commute with $H_0 ^0$. Furthermore, the original Hamiltonian is
invariant under the discrete symmetry $J_2 \to - J_2$. It can be
easily turned into a discrete symplectic symmetry $S: (J_2,
\theta_2) \to - (J_2, \theta_2)$. Due to a Theorem from Gaeta
\cite{Gaeta} the normal form also enjoys this symmetry, that is,
it depends on $J_2 ^2$.

Define the class of functions $\mathcal{P}_i, i > 0$
$$
\sum_{k=0} ^{2i} P_k (J_1, J_2) e^{2(i-k)\theta_1},
$$
where $P_k (J_1, J_2)$ is a homogeneous polynomial of degree $k$.
The operator $\mathcal{H}_i  = \{H_0 ^0, .\}: \mathcal{P}_i \to
\mathcal{P}_i$ just gives $W_i \to -\sqrt{2} \partial W_i /
\partial \theta_1$. Moreover, this operator is simple:
$\mathcal{P}_i = {\rm kernel} (\mathcal{H}_i) \oplus {\rm range}
(\mathcal{H}_i)$. Obviously, the kernel consists of functions
$\mathcal{Q}_i$, which are independent of $\theta_1$. The range
consists of functions $\mathcal{R}_i = \mathcal{P}_i \setminus
\mathcal{Q}_i$.

For the uniqueness we need to verify two more conditions:
$\{\mathcal{P}_i, \mathcal{R}_j \} \subset \mathcal{P}_{i+j}$ and
$ \{\mathcal{Q}_i, \mathcal{Q}_j \} = 0$. The second condition is
fulfilled, because $\{J_1, J_2\} = 0$. For the first condition we
have
$$
\{\mathcal{P}_i, \mathcal{R}_j \} \subset \{\mathcal{P}_i,
\mathcal{P}_j \} \subset \mathcal{P}_{i+j - 1} \subset
\mathcal{P}_{i+j}.
$$
Now, we apply Theorem \ref{Th1} and the iterative procedure given
above. We give the first steps of the calculation of the normal
form, which are easy.

From (\ref{2.4}) it follows
\begin{equation}
\label{2.16}
H_0 ^1 = H_1 ^0 + \{H_0 ^0, W_1\}.
\end{equation}
Since $H_1 ^0 \equiv 0$, we choose $H_0 ^1 = W_1 \equiv 0$. Next,
\begin{equation}
\label{2.18}
H_0 ^2 = H_1 ^1 + \{H_0 ^0, W_1\} = H_1 ^1.
\end{equation}
Again from (\ref{2.4}) for $H_1 ^1$ we have
\begin{equation}
\label{2.19}
H_1 ^1 = H_2 ^0 + \{H_0 ^0, W_2\} = H_2 ^0 + \sqrt{2}
\{J_1, W_2 \} = H_2 ^0 - \sqrt{2} \frac{\partial W_2}{\partial
\theta_1}.
\end{equation}
Combining the last two identities yields
\begin{align}
\label{2.20}
H_0 ^2 = H_1 ^1 = & \frac{1}{2} (3 J_1^2 + J_2^2) \\
                + & \frac{1}{4}\left(e^{4\theta_1} + e^{-4\theta_1}(J_1^2 + J_2^2)^2
-4 J_1 e^{2 \theta_1} - 4 J_1 (J_1^2 + J_2^2) e^{-2 \theta_1}
\right) - \sqrt{2}\frac{\partial W_2}{\partial \theta_1}.
\nonumber
\end{align}
We choose
\begin{equation}
\label{2.21}
H_0 ^2 = \frac{1}{2} (3 J_1^2 + J_2^2)
\end{equation}
and then we solve the homological equation
\begin{equation}
\label{2.22}
\sqrt{2}\frac{\partial W_2}{\partial \theta_1} =
\frac{1}{4}\left(e^{4\theta_1} + e^{-4\theta_1}(J_1^2 + J_2^2)^2
-4 J_1 e^{2 \theta_1} - 4 J_1 (J_1^2 + J_2^2) e^{-2 \theta_1}
\right)
\end{equation}
for $W_2$ after trivial integration to get
\begin{equation}
\label{2.23}
W_2 = \frac{1}{4 \sqrt{2}} \left( \frac{1}{4}
e^{4\theta_1} - 2 J_1 e^{2 \theta_1} + 2 J_1 (J_1^2 + J_2^2) e^{-2
\theta_1} - \frac{1}{4} e^{-4\theta_1}(J_1^2 + J_2^2)^2 \right).
\end{equation}
To perform next steps the computer algebra is used to calculate
numerous brackets. The convergency is not an issue, since it is
well-known that for Liouville integrable systems this normal form
is convergent \cite{Ito}. Skipping the details, we obtain
\begin{theorem}
\label{ThNF}
The normal form at the focus-focus point $(0, 0)$ of the champagne bottle is
\begin{align}
\label{2.24}
\overline{H} = &\sqrt{2} J_1 + \frac{1}{4} (3J_1^2 +
J_2^2) - \frac{\sqrt{2}}{32} J_1 (17 J_1^2 + 9 J_2^2)
 + \frac{1}{256} (375 J_1^4 + 258 J_1^2 J_2^2 + 11 J_2^4)  \\
& - \frac{\sqrt{2}}{4096} J_1 (10689 J_1^4 + 8910 J_1^2 J_2^2 +
909 J_2^4) + O (J^6). \nonumber
\end{align}
\end{theorem}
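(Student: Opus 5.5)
The plan is to run the Lie--Deprit algorithm of Section 3 row by row, up to order $\varepsilon^8$, in the almost action--angle coordinates (\ref{2.13}), and then set $\varepsilon=1$. Theorem \ref{Th2} together with the uniqueness conditions verified above guarantees two things: every $H^i_0$ can be chosen in $\mathcal{Q}_i$, i.e.\ as a homogeneous polynomial in $(J_1,J_2)$ independent of $\theta_1$, and this choice is unique. Convergence is supplied by \cite{Ito}. By the discrete symmetry $S$ and Gaeta's theorem, only even powers of $J_2$ can appear. Since $H^0_j=0$ for $j\neq 0,2$, every odd row $H^{2m+1}_0$ vanishes and we may take $W_{2m+1}=0$. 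It is therefore enough to compute $H^2_0,H^4_0,H^6_0,H^8_0$.

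At row $N$ the three steps are as follows.
\begin{enumerate}
\item Set $W_N=0$ and evaluate the recursion (\ref{2.4}) to obtain the provisional terms $L^i_j$. The brackets are computed with $\{F,G\}=\sum_k(\partial_{\theta_k}F\,\partial_{J_k}G-\partial_{J_k}F\,\partial_{\theta_k}G)$; for elements of $\mathcal{P}$ only the $\theta_1$--$J_1$ pair contributes, because nothing depends on $\theta_2$.
\item Split $L^N_0=\sum_k P_k(J)e^{2(N/2-k)\theta_1}$ into its $\theta_1$-independent term (the coefficient of $e^{0}$), which becomes $H^N_0$, and the remaining exponentials. Remove the latter by solving $\sqrt2\,\partial_{\theta_1}W_N=L^N_0-H^N_0$, integrating each $e^{2m\theta_1}$ term to $e^{2m\theta_1}/(2m\sqrt2)$, exactly as in (\ref{2.22})--(\ref{2.23}).
\item Correct the remaining entries of the row by $\{H^0_0,W_N\}$ (more precisely, by the terms involving $W_N$ in (\ref{2.4})).
\end{enumerate}
Each row is thus finitely many exact manipulations on the finite sums defining $\mathcal{P}_i$.

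The first row gives $H^2_0=\tfrac12(3J_1^2+J_2^2)$, which after dividing by $2!$ is the quadratic term $\tfrac14(3J_1^2+J_2^2)$ of (\ref{2.24}). Two checks apply here. First, the cubic term $-\tfrac{\sqrt2}{32}J_1(17J_1^2+9J_2^2)$ must come from $H^4_0/4!$ once the scaling is undone. The relation to watch is that $H^{2m}_0$ has degree $m+1$ in $J$: it belongs to $\mathcal{P}_{2m}$ by the bracket property $\{\mathcal{P}_i,\mathcal{P}_j\}\subset\mathcal{P}_{i+j-1}$, and the degree count follows, so the grading is consistent. Second, every coefficient of $J_1^{m+1}$ carries the factor $\sqrt2^{\,1-m}$ coming from the homological divisions. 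This is a useful sanity check on the $\sqrt2$ powers in (\ref{2.24}).

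An independent check on the $J_2=0$ part is the expansion of the one-dimensional normal form of the hyperbolic point of $\tfrac{\sqrt2}{2}(y^2-x^2)+\tfrac12x^4$. This can be computed directly from the regularized action integral, or by comparing with the action expansion of Section 4. It should reproduce $\sqrt2J_1+\tfrac34J_1^2-\tfrac{17\sqrt2}{32}J_1^3+\dots$.

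The main obstacle is the sheer size of rows 6 and 8. These require nested brackets of $W_2,W_4,W_6$ with many exponential and polynomial terms, and this is exactly where computer algebra is unavoidable. I would implement the recursion symbolically, representing each element of $\mathcal{P}_i$ as a dictionary indexed by the power of $e^{2\theta_1}$ with polynomial coefficients. I would verify the output at each order by two tests:
\begin{itemize}
\item the $J_2\to-J_2$ invariance;
\item a numerical comparison. Evaluate $\overline H$ at $J_1=\tfrac{1}{2\pi}\oint$ of the corresponding integral, or simpler, check that $\overline H(J_1,J_2)$ composed with the inverse transformation agrees with (\ref{2.11}) to $O(\varepsilon^{10})$ at random points.
\end{itemize}
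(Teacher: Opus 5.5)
Your plan is the paper's own proof: Lie--Deprit in the almost action--angle coordinates (\ref{2.13}), with the homological equation $\sqrt2\,\partial_{\theta_1}W_N=L^N_0-H^N_0$ solved by integrating the exponentials in $\theta_1$ (done by hand in the paper only for $W_2$ and $H^2_0$) and the higher even rows left to computer algebra, and your suggested check against the vanishing-cycle integral is exactly Remark 2 of the paper (carried one order beyond what the paper prints, that residue does reproduce the quintic coefficients $10689$, $8910$, $909$). One bookkeeping slip: $H^{2m}_0$ is homogeneous of degree $2m+2$ in $(q,p)$, so it lies in the paper's $\mathcal{P}_{m+1}$ rather than $\mathcal{P}_{2m}$ (whose $\theta_1$-independent part has degree $2m$), and this is what gives degree $m+1$ in $J$; accordingly the exponents in your Step 2 should read $e^{2(N/2+1-k)\theta_1}$.
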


For certain reasons further on we will need the inverse of
$\overline{H} (J_1, J_2) = h$ with respect to $J_1$  $(J_2 =
j_2)$, so that $\overline{H} (J_1 (h, j_2) , j_2) = h$ is an
identity. Straightforward computations yield that
\begin{equation}
\label{2.25}
J_1 (h, j_2) = \frac{1}{\sqrt{2}} \Big[ h -
\frac{1}{8}(3 h^2 + 2 j_2 ^2) + \frac{h}{64} (35 h^2 + 30 j_2^2) -
\frac{5}{1024}(231 h^4 + 252h^2 j_2^2 + 28 j_2^4)\Big] + \ldots
\end{equation}

\section{Action Integrals}

In this section, the calculation of the action integrals for the
champagne bottle is carried out. First, we have to find where they
exist. Since the already used coordinates in the previous section
do not fit well for that purpose, we introduce the polar
coordinates
\begin{align}
\label{3.1}
x_1 &= r \cos \varphi, &y_1 = p_r \cos \varphi - \frac{p_r}{r} \sin \varphi, \\
x_2 &= r \sin \varphi, &y_2 = p_r \sin \varphi + \frac{p_r}{r}
\cos \varphi. \nonumber
\end{align}
Then the Hamiltonian (\ref{1.1}) takes the form
\begin{equation}
\label{3.2}
H = \frac{1}{2} \left(p_r ^2 + \frac{p_{\varphi} ^2}{r^2} \right) + r^4 - r^2
\end{equation}
and the other integral becomes $J_2 = p_{\varphi} = j_2$.

It is shown in Bates \cite{B} that the critical values of the
momentum mapping $F$ (\ref{2.55}) of the champagne bottle are $(0, 0)$ and the curve parameterized by
$$
(h, j_2) := (3 r^4 - 2 r^2, \pm \sqrt{4 r^6 - 2 r^4}), \qquad r \geq \frac{1}{\sqrt{2}}.
$$
Denote the set of regular values of $F$ by $U_r$, see Fig. 1. For
the points $(h, j_2) \in U_r$ the connected components of $ F^{-1}
(h, j_2)$ (that is, the connected components of the level surfaces
determined by the equations $H = h, J_2 = j_2$) are two - tori
$T_{h, j_2}$.

\begin{figure}[ht]
\centering
\begin{tikzpicture}
\draw [help lines, ->] (-3.5,0)--(3.5,0); \draw [help lines, ->]
(0,-3.1)--(0,3.0);

\node [below right] at (3.5,0) {$j_2$}; \node [left] at (0,3.0)
{$h$};

\draw (-3,2.5) .. controls (0,-4)  .. (3,2.5);

\node [below left] at (2,2) {$U_r$}; \node [below left] at
(0,-2.5) {$-1/4$};

\draw[dashed](-2.5,1.5) .. controls (0,-0.5)  .. (2.5,1.5);

\draw[red,fill=red] (0,0) circle (.5ex);

\end{tikzpicture}
\caption{The set $U_r$ of regular values of the momentum map.}
\end{figure}

Let us choose a basis of the homology group $\rm{H}_1 (T_{h, j_2},
\mathbb{Z})$ with the following representatives: for $\gamma_1$ we
fix $p_{\varphi}$ and $\varphi$ and let $r, p_r$ make one circle
on the curve
$$
\frac{1}{2} \left(p_r ^2 + \frac{p_{\varphi} ^2}{r^2} \right) +
r^4 - r^2 = h;
$$
for $\gamma_2$ we fix $r, p_r$ and make $\varphi$ to cover $ [0, 2
\pi]$. Then the action integrals are
\begin{equation}
\label{3.3}
I_1 = \frac{1}{2\pi} \oint_{\gamma_1} p_r d r =
\frac{1}{2\pi} \oint_{\gamma_1} \frac{\sqrt{2(hr^2 + r^4 - r^6) -
j_2^2}}{r} d r.
\end{equation}
\begin{equation}
\label{3.4}
I_2 = J_2 = \frac{1}{2\pi} \oint_{\gamma_2} p_{\varphi} d \varphi = j_2
\end{equation}
It is convenient to scale the action integrals as
\begin{equation}
\label{scal}
\tilde{I}_1 = \sqrt{2} I_1, \quad \tilde{J}_1 = \sqrt{2} J_1, \quad \tilde{J}_2 = \sqrt{2} J_2, \quad \tilde{j}_2 = \sqrt{2} j_2
\end{equation}
and drop the tildes thereafter. We can always go back from
(\ref{scal}) when necessary. Observe that, in view of the above
scaling the normal form (\ref{2.24}) and its inverse (\ref{2.25}) become
\begin{align}
\label{nf}
\overline{H} = &J_1 + \frac{1}{8} (3J_1^2 + J_2^2) - \frac{1}{64} J_1 (17 J_1^2 + 9 J_2^2) + \frac{1}{1024} (375 J_1^4 + 258 J_1^2 J_2^2 + 11 J_2^4)  \\
& - \frac{1}{16 384} J_1 (10689 J_1^4 + 8910 J_1^2 J_2^2 + 909J_2^4) + \ldots = h, \nonumber
\end{align}
\begin{equation}
\label{invnf}
J_1 (h, j_2) =  h - \frac{1}{8}(3 h^2 +  j_2 ^2) + \frac{h}{64} (35 h^2 + 15 j_2^2) -
\frac{1}{1024}(1155 h^4 + 630h^2 j_2^2 + 35 j_2^4) + \ldots .
\end{equation}

Further, we transform (\ref{3.3}) by putting
\begin{equation}
\label{3.5}
z = r^2, \quad d z = 2 r d r, \quad w^2 = P (z):= - 2 z^3 + 2 z^2 + 2 h z - \frac{j_2 ^2}{2}.
\end{equation}
Denote the real oval of the elliptic curve
\begin{equation}
\label{3.6}
\Gamma = \left\{ (z, w): \quad w^2 = P (z) = - 2 z^3 + 2 z^2 + 2 h z - \frac{j_2 ^2}{2} \right\},
\end{equation}
which exists for all $(h, j_2) \in U_r$ by $\beta$. Then (\ref{3.3}) becomes
\begin{equation}
\label{3.7}
I_1 (h, j_2) = \frac{\sqrt{2}}{4 \pi} \oint_{\beta} \frac{w}{z} d z.
\end{equation}
Let the roots of $P (z)$ be $z_j, \, j = 1, 2, 3$. They are real
and located as follows: $z_1 \leq 0 \leq z_2 < z_3$. The
$\beta$-cycle encloses the interval $[z_2, z_3]$ along which $w^2
\geq 0$, see Fig. 2. On the other hand, the $\alpha$ - cycle
encloses the interval  $[z_1, z_2]$ along which $w^2 \leq 0$. That
is why it is  called an {\it imaginary} cycle. For a nice
introduction of the {\it imaginary} cycles of the elliptic curves,
see for example \cite{A}. Evidently, when $h, j_2 \to 0$,  then
$z_1, z_2 \to 0$, so this cycle becomes arbitrary small, therefore
it is also called {\it vanishing} cycle.

\vspace{2ex}

\begin{figure}[ht]
\centering
\begin{tikzpicture}
\draw [help lines, ->] (-3.5,0)--(8.5,0);

\draw [blue, line width=0.8] (4.15,0) ellipse (2.15 and 0.8);
\node [above right] at (5.3,1) {$\beta$};

\draw [dashed][red, line width=0.8] (2.15,0) ellipse (2.15 and
0.8); \node [above right] at (-0.1,1) {$\alpha$};

\draw[line width=0.2mm, -{Stealth[blue]}] (4.15,0.818) --
(4.15001,0.818) node[midway, above=0.2pt] {};

\draw[line width=0.2mm, -{Stealth[reversed,red]}] (2.15,0.818) --
(2.15003,0.818) node[midway, above=0.2pt] {};

\node [below right] at (2.3,0) {$z_2$}; \fill   (2.8,0) circle
(0.5mm);

\node [below right] at (5.3,0) {$z_3$}; \fill   (5.8,0) circle
(0.5mm);

\node [below right] at (0.3,0) {$z_1$}; \fill   (0.8,0) circle
(0.5mm);

\end{tikzpicture}
\caption{The $\beta$ - cycle (blue) and the vanishing $\alpha$ -
cycle (red)}
\end{figure}

Before stating the next lemma we recall that the functions
\begin{align*}
 F (x; \kappa) &= \int_0 ^x \frac{d t}{\sqrt{(1-t^2)(1-\kappa^2 t^2)}} = \int_0 ^{\theta} \frac{d t}{\sqrt{1 - \kappa^2 \sin^2 t}} =: F (\theta; \kappa), \\
 E (x; \kappa) &= \int_0 ^x \sqrt{\frac{1-\kappa^2 t^2}{1-t^2}} d t = \int_0 ^{\theta} \sqrt{1 - \kappa^2 \sin^2 t} d t = : E (\theta; \kappa), \\
 \Pi (n; x; \kappa) &= \int_0 ^x \frac{d t}{(1-n t^2) \sqrt{(1-t^2)(1-\kappa^2 t^2)}} \\
 &= \int_0 ^{\theta} \frac{d t}{(1-n \sin^2 t)\sqrt{1 - \kappa^2 \sin^2 t}} =: \Pi (n; \theta, \kappa)
\end{align*}
are called incomplete elliptic integrals in the Legendre canonical
form of first, second and third kind, respectively. The number
$\kappa$ is called modulus, $n$ is said to be characteristic.

For the particular value $x=1$, we have the notations $K
(\kappa):= F(1; \kappa), \, E (\kappa):= E(1; \kappa)$ and $\Pi
(n, \kappa):= \Pi (n; 1, \kappa)$, which are known as complete
elliptic integrals of first, second and third kind, respectively.

\begin{lm}
\label{lem1}
The action integral (\ref{3.7}) has the following representation
in terms of the standard Legendre's integrals
\begin{equation}
\label{3.8}
 2 \pi  I_1 (h, j_2) = g_1 K (\kappa) + g_2 E (\kappa) + g_3  \Pi (n, \kappa),
\end{equation}
where
$$
g_1 =  \frac{4 (2h + z_1)}{3 \sqrt{z_3 - z_1}}, \quad g_2 =
\frac{4 \sqrt{z_3 - z_1}}{3}, \quad g_3 = - \frac{ j_2 ^2}{ z_3
\sqrt{z_3 - z_1}}
$$
and
\begin{equation}
\label{3.9}
\kappa^2 = \frac{z_3 - z_2}{z_3 - z_1}, \qquad n = \frac{z_3 - z_2}{z_3}.
\end{equation}
\end{lm}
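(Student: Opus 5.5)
The plan is to reduce the integral (\ref{3.7}) to a real integral over the interval $[z_2,z_3]$, split the integrand into a polynomial part plus a simple pole at $z=0$, and then apply one standard Legendre substitution that produces $K$, $E$ and $\Pi$ simultaneously.

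\textbf{Step 1: reduce to a real integral.} The $\beta$-cycle traverses $[z_2,z_3]$ twice, once on each sign of $w$. Hence
$$
2\pi I_1=\frac{\sqrt 2}{2}\cdot 2\int_{z_2}^{z_3}\frac{w}{z}\,dz .
$$
Write $P(z)=2R(z)$ with $R(z)=(z_3-z)(z-z_2)(z-z_1)$, which is nonnegative on $[z_2,z_3]$. This gives
$$
2\pi I_1=2\int_{z_2}^{z_3}\frac{R(z)}{z\sqrt{R(z)}}\,dz .
$$
Comparing $P(z)=-2z^3+2z^2+2hz-j_2^2/2$ with $2R$ yields the Vieta relations
$$
z_1+z_2+z_3=1,\qquad z_1z_2+z_1z_3+z_2z_3=-h,\qquad z_1z_2z_3=-j_2^2/4 .
$$

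\textbf{Step 2: polynomial division.} Dividing $R$ by $z$ gives
$$
\frac{R(z)}{z}=-z^2+z+h-\frac{j_2^2}{4z}.
$$
The integral therefore splits into
$$
\int\frac{z^k\,dz}{\sqrt R}\quad (k=0,1,2) \qquad\text{and}\qquad \int\frac{dz}{z\sqrt R}.
$$
The $z^2$ term is lowered using
$$
\frac{d}{dz}\bigl(z^m\sqrt{R}\bigr)=\frac{mz^{m-1}R+\tfrac12 z^mR'}{\sqrt R}
$$
with $m=0$. The boundary term vanishes at $z_2$ and $z_3$ because $R$ does. This expresses $\int z^2/\sqrt R$ through $\int z/\sqrt R$ and $\int 1/\sqrt R$, and should account for the factor $\tfrac{2}{3}$ (hence the $\tfrac43$) in $g_1$ and $g_2$.

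\textbf{Step 3: Legendre substitution.} Substitute $z=z_3-(z_3-z_2)\sin^2 t$ for $t\in[0,\pi/2]$. Then
$$
z-z_1=(z_3-z_1)(1-\kappa^2\sin^2 t),\qquad \frac{dz}{\sqrt R}=\frac{2\,dt}{\sqrt{z_3-z_1}\,\sqrt{1-\kappa^2\sin^2 t}},
$$
with $\kappa$ as in (\ref{3.9}). Moreover
$$
z=z_3(1-n\sin^2 t),\qquad n=(z_3-z_2)/z_3 ,
$$
so the pole term becomes
$$
-\frac{j_2^2}{4}\cdot\frac{2}{z_3\sqrt{z_3-z_1}}\,\Pi(n,\kappa),
$$
and after the overall factor $2$ this is exactly $g_3$. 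The term $\int z/\sqrt R$ is handled by writing $z=z_1+(z_3-z_1)(1-\kappa^2\sin^2t)$, which gives $\sqrt{z_3-z_1}\,E$ plus a multiple of $K$ involving $z_1$.

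\textbf{Main obstacle.} The delicate part is the bookkeeping in collecting the $K$ and $E$ coefficients. In particular, I must show that the combination produced by $h$, $z_3$, $z_1$ and the Vieta relations collapses to $g_1=4(2h+z_1)/(3\sqrt{z_3-z_1})$, with no stray $z_3$ or constant terms. I would do this last, checking it against the limiting case $j_2=0$, where $z_2=0$, $n=1$, and the $\Pi$ term drops out.
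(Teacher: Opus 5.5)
Your proposal is correct and follows the paper's proof essentially step for step. You use the same splitting $w/z=w^2/(zw)$ into $\int z^k/\sqrt{R}$ ($k=0,1,2$) and $\int dz/(z\sqrt{R})$, and the same Legendre substitution (the paper's $z=z_3-(z_3-z_2)u^2$ is your $u=\sin t$); for the $z^2$-integral you use the reduction $\int_{z_2}^{z_3}R'/\sqrt{R}\,dz=0$, where the paper only says ``in a similar way'' or cites Byrd--Friedman.

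The bookkeeping you flag as the main obstacle is immediate and needs no Vieta relations. Since $R'=-3z^2+2z+h$, you get $2\pi I_1=\tfrac23\int_{z_2}^{z_3}\frac{z\,dz}{\sqrt{R}}+\tfrac{4h}{3}\int_{z_2}^{z_3}\frac{dz}{\sqrt{R}}-\tfrac{j_2^2}{2}\int_{z_2}^{z_3}\frac{dz}{z\sqrt{R}}$. Inserting $\int\frac{dz}{\sqrt{R}}=\frac{2K}{\sqrt{z_3-z_1}}$ and $\int\frac{z\,dz}{\sqrt{R}}=\frac{2z_1K}{\sqrt{z_3-z_1}}+2\sqrt{z_3-z_1}\,E$ then gives $g_1$ and $g_2$ directly.
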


{\bf Proof.} We transform $I_1$ as follows
\begin{equation}
\label{3.10}
\frac{4 \pi I_1}{\sqrt{2}} = \oint_{\beta}
\frac{w}{z} d z = \oint_{\beta} \frac{w^2}{z w} d z = - 2
\oint_{\beta} \frac{z^2}{w} d z + 2 \oint_{\beta} \frac{z}{w} d z
+ 2h \oint_{\beta} \frac{d z}{w} - \frac{j_2^2}{2} \oint_{\beta}
\frac{d z}{z w} .
\end{equation}
Remembering that the $\beta$-cycle encircles the interval $[z_2,
z_3]$ we get
\begin{align}
\label{3.11}
2 \pi I_1 &= -2  \int_{z_2} ^{z_3} \frac{z^2 d z}{\sqrt{(z_3 -z)(z-z_2)(z-z_1)}} + 2  \int_{z_2} ^{z_3} \frac{z d z}{\sqrt{(z_3 -z)(z-z_2)(z-z_1)}} \\
& + 2h  \int_{z_2} ^{z_3} \frac{ d z}{\sqrt{(z_3
-z)(z-z_2)(z-z_1)}} - \frac{j_2 ^2}{2}  \int_{z_2} ^{z_3} \frac{ d
z}{z \sqrt{(z_3 -z)(z-z_2)(z-z_1)}}. \nonumber
\end{align}
Now, we can transform these integrals in a standard way to their
Legendre canonical form or use appropriate formulas from Byrd \&
Friedman \cite{ByrdFriedman}. For example,
\begin{align*}
\int_{z_2} ^{z_3} \frac{d z}{z \sqrt{(z_3 -z)(z-z_2)(z-z_1)}} \overset{z=z_3-t^2}{= =} 2 \int_0 ^{\sqrt{z_3-z_2}} \frac{dt}{(z_3 -t^2)\sqrt{(z_3-z_2-t^2)(z_3-z_1-t^2)}} \\
\overset{t=\sqrt{z_3-z_2}u}{= =} \frac{2}{z_3\sqrt{z_3-z_1}}
\int_0 ^1 \frac{d u}{(1- n u^2)\sqrt{(1-u^2)(1-\kappa^2 u^2)}} =
\frac{2}{z_3 \sqrt{z_3 -z_1}} \Pi (n, \kappa).
\end{align*}
(compare with the formula 226.02 in \cite{ByrdFriedman}).

In a similar way we get
$$
\int_{z_2} ^{z_3} \frac{d z}{\sqrt{(z_3 -z)(z-z_2)(z-z_1)}} =
\frac{2}{\sqrt{z_3-z_1}} K (\kappa),
$$
$$
\int_{z_2} ^{z_3} \frac{z d z}{\sqrt{(z_3 -z)(z-z_2)(z-z_1)}} =
\frac{2 z_1}{\sqrt{z_3-z_1}} K (\kappa) + 2 \sqrt{z_3 -z_1} E
(\kappa)
$$
and
$$
\int_{z_2} ^{z_3} \frac{z^2 d z}{\sqrt{(z_3 -z)(z-z_2)(z-z_1)}} =
\frac{2h +4z_1}{3\sqrt{z_3-z_1}}K (\kappa) +\frac{4}{3} \sqrt{z_3
-z_1} E (\kappa)
$$
(or use 235.00, 236.01 and 230.01 from \cite{ByrdFriedman}
alternatively). Combining all these expressions we obtain the
needed result (\ref{3.8}).

\hfill $\square$

\begin{lm}
\label{lem2}
The non-trivial action of the champagne bottle near the
focus-focus point $(0, 0)$ as a function of $h$ and $j_2$ has the
expansion
\begin{align}
\label{3.13}
2 \pi I_1 (h, j_2) &= \frac{4}{3} -\pi | j_2 | + j_2 \arctan \frac{j_2}{h} + \mathcal{J} (h, j_2) \ln  \frac{16}{\sqrt{h^2 +  j_2^2}}  \\
          &+ h + \frac{17 h^2 + 3 j_2^2}{16} -\frac{h (118 h^6 + 125 h^4 j_2 ^2 + 194 h^2 j_2 ^4 + 39 j_2 ^6)}{64 (h^2 + j_2^2)^2} + \ldots , \nonumber
\end{align}
where the coefficient in front of the logarithm is
\begin{equation}
\label{3.14}
\mathcal{J} (h, j_2) =  h - \frac{1}{8}(3 h^2 + j_2 ^2) + \frac{h}{64} (35 h^2 + 15 j_2^2) -
\frac{1}{1024}(1155 h^4 + 630 h^2 j_2^2 + 35 j_2^4) + \ldots .
\end{equation}
\end{lm}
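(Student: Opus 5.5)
Starting from Lemma \ref{lem1}, I would expand each ingredient of (\ref{3.8}) as $(h,j_2)\to(0,0)$. The plan has three steps: expand the roots $z_1,z_2,z_3$, then the Legendre integrals, then collect terms.

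\emph{The roots.} As $(h,j_2)\to 0$, the roots $z_1,z_2$ of $P(z)=-2z^3+2z^2+2hz-j_2^2/2$ coalesce at $0$, while $z_3\to 1$. Near the origin the quadratic part $2z^2+2hz-j_2^2/2$ dominates, so
\[
z_{1,2}\approx \tfrac12\bigl(-h\mp\sqrt{h^2+j_2^2}\bigr).
\]
Higher corrections come from treating the cubic term perturbatively, and $z_3=1+h+\ldots$ follows from $z_1+z_2+z_3=1$. These are series in $h$, $j_2$ and $\rho:=\sqrt{h^2+j_2^2}$.

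\emph{The Legendre integrals.} Set $\kappa'^2=1-\kappa^2=(z_2-z_1)/(z_3-z_1)\approx\rho$. Then $\kappa'\to0$, and the classical expansions
\[
K=\ln\tfrac{4}{\kappa'}+\tfrac{\kappa'^2}{4}\Bigl(\ln\tfrac4{\kappa'}-1\Bigr)+\ldots,\qquad
E=1+\tfrac{\kappa'^2}{2}\Bigl(\ln\tfrac4{\kappa'}-\tfrac12\Bigr)+\ldots
\]
apply. Since $\kappa'^2\approx\rho$, we get $\ln(4/\kappa')=\tfrac12\ln(16/\rho)+\text{analytic corrections}$, which explains the $\ln(16/\sqrt{h^2+j_2^2})$ in (\ref{3.13}).

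The term $g_3\Pi(n,\kappa)$ is the delicate one. Here $n=(z_3-z_2)/z_3\to 1$ at the same time as $\kappa\to1$. I would use the standard reduction of $\Pi$ to Heuman's Lambda function, or equivalently to incomplete $F,E$ with $\sin^2\psi=(1-n)/(1-\kappa^2)$ (Byrd \& Friedman 410--415). Note that $1-n=z_2/z_3$ and $1-\kappa^2$ are both of order $\rho$, so $\psi$ stays finite. The Lambda term produces $\tfrac{\pi}{2}\Lambda_0(\psi,\kappa)\cdot(\ldots)$. Because the prefactor $g_3\sim -j_2^2/\ldots$ combines with $\sqrt{n(1-n)(n-\kappa^2)}$ to give $\pm|j_2|$, this yields the term $-\pi|j_2|$ together with $j_2\arctan(j_2/h)$. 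The arctan arises from $\psi$, since $\tan\psi$ is expressible through $z_1,z_2$, i.e.\ through $h$ and $j_2$. The remaining $K$-multiplied part of $\Pi$ adds further logarithmic contributions.

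\emph{Collecting terms.} First, the constant $4/3$ comes from $g_2E\to\tfrac43$. Second, I would gather all coefficients of $\ln(16/\rho)$ into $\mathcal J(h,j_2)$ and expand them to fourth order. Third, the analytic remainder $h+\tfrac{17h^2+3j_2^2}{16}+\ldots$ is obtained by computer algebra, with the rational-in-$\rho$ terms arranged so that odd powers of $\rho$ cancel. As a check, $\mathcal J$ should agree with (\ref{invnf}), i.e.\ $\mathcal J=J_1(h,j_2)$. This is expected from the theory: the coefficient of the logarithm is the imaginary (vanishing-cycle) action, which equals the Birkhoff normal form action $J_1$. I could alternatively compute $\mathcal J$ directly as $\frac{1}{2\pi i}\oint_\alpha$ by residues of a series in the small roots.

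The main obstacle is the third-kind integral in the double limit $n\to1$, $\kappa\to1$. Extracting the non-analytic $-\pi|j_2|+j_2\arctan(j_2/h)$ cleanly and separating it from the logarithmic and analytic parts requires care with the branches; $h$ may be negative, so $\arctan(j_2/h)$ must be read consistently. I would validate the final expansion numerically against (\ref{3.7}) at several small $(h,j_2)$.
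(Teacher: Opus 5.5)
Your plan follows the paper's proof step for step: expand $z_1,z_2,z_3$ in the small parameter, use the Cayley expansions of $K$ and $E$ at $\kappa'\to 0$, reduce $\Pi(n,\kappa)$ to Heuman's $\Lambda_0$ with $\sin^2\theta=(1-n)/\kappa'^2$ so that $2\theta$ gives the $\arctan$ term, and identify $\mathcal J$ with the vanishing-cycle integral, i.e.\ with $J_1(h,j_2)$, as in Remark 2. Two small corrections: the prefactor in the Heuman reduction is $\sqrt{n/((n-\kappa^2)(1-n))}$, not $\sqrt{n(1-n)(n-\kappa^2)}$, and together with $z_1z_2z_3=-j_2^2/4$ it gives exactly $g_3\cdot\frac{\pi}{2}\sqrt{n/((n-\kappa^2)(1-n))}=-\pi|j_2|$; also, $\pi|j_2|\Lambda_0$ itself carries logarithmic terms through $K(\kappa)$ in (\ref{3.19}), and these must be collected as well, since they change the $j_2^2$ coefficient of $\mathcal J$ (without them you would get $-\frac{3}{8}j_2^2$ instead of $-\frac18 j_2^2$).
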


\noindent
{\bf Proof.} Clearly  when $h, j_2 \to 0$ then $z_1
\rightarrow 0 \leftarrow z_2$ and $z_3 \to 1$, moreover, $\kappa$
and $n$ tend to 1. Hence, we need the expansions of the complete
elliptic integrals in the singular limit $\kappa \to 1$. To
facilitate the calculations, we put $h \to \mu h$ and $j_2 \to \mu
j_2$, where $\mu$ is a formal small parameter and develop in
$\mu$. Eventually, we set $\mu = 1$, but keep in mind that $h$ and
$j_2$ are close to zero.

\vspace{2ex}

\begin{figure}[ht]
\centering
 \begin{tikzpicture}
  \draw (-1,0) to[bend left] (1,0);
  \draw (-1.2,.1) to[bend right] (1.2,.1);
  \draw[rotate=0] (0,0) ellipse (80pt and 40pt);
 \draw[line width=0.25mm, blue][rotate=0] (0,0) ellipse (60pt and 30pt);
\draw[line width=0.25mm, red] (-0.5,-1.38) to[bend left]
(-0.3,-0.24); \draw[line width=0.25mm, red,dashed] (-0.5,-1.38)
to[bend right] (-0.3,-0.22);

\node [below left] at (-0.5,-0.3) {$ \alpha$};

\node [below right] at (2.1,0.3) {$ \beta$};

\node [below right] at (3.3,0) {$ h, j_2 \to 0$};

\draw[black] (9.5,-1.36) to[bend left] (9.5,-0.47);
\draw[black,dashed] (9.5,-1.34) to[bend right] (9.41,-0.35);

\fill [red] (6.0,0) circle[radius=2.5pt];

 \draw [black  ] (6,0) to[out=50, in=100] (12.75,0);
\draw [black ] (6,0) to[out=-30, in=-90] (12.75,0);

 \draw [line width=0.25mm, blue ] (6,0) to[out=30, in=120] (11,0);
\draw [line width=0.25mm, blue] (6,0) to[out=-20, in=-70] (11,0);

 \draw [black] (6,0) to[out=9.5, in=120] (9.5,0);
\draw [black] (6,0) to[out=-9.5, in=-70] (9.5,0);

\draw[black] (8.5,0.56) to[bend left] (8.8,1.61);
\draw[black,dashed] (8.5,0.56) to[bend right] (8.8,1.59);

\node [below right] at (11.1,0.3) {$ \beta$};

\end{tikzpicture}
\caption{The singular fibre (pinched torus) at the focus-focus
point.}
\end{figure}

\vspace{2ex}

The roots of $P (z)$ have the following expansions
\begin{eqnarray}
\label{3.15}
z_1 =&  - \mu \frac{\sqrt{h^2 +  j_2 ^2} + h}{2}  + \mu^2 c_2 + \ldots, \nonumber \\
z_2 =&  + \mu \frac{\sqrt{h^2 +  j_2 ^2} - h}{2}  + \mu^2 b_2 + \ldots, \\
z_3 =&  1 + \mu h - \mu^2 (h^2 + \frac{j_2 ^2}{4}) + \ldots, \nonumber
\end{eqnarray}
where
$$
b_2 = \frac{(4 h^2 + j_2 ^2) \sqrt{h^2 + j_2^2} -4 h^3 - 3 h
j_2^2}{8 \sqrt{h^2 + j_2^2}}, \qquad c_2 = \frac{(4 h^2 + j_2 ^2)
\sqrt{h^2 + j_2^2} + 4 h^3 + 3 h j_2^2}{8 \sqrt{h^2 + j_2^2}}.
$$
 The complementary modulus $\kappa' $ becomes
$$
\kappa'^2 = 1 - \kappa^2 =  \mu d_1 + \mu^2 d_2 + \ldots,
$$
where
$$
d_1 = b_1-c_1, \qquad  d_2 = b_2 - c_2 + (b_1 - c1)(c_1 - a_1).
$$
Denote $ \Lambda := \ln \frac{\kappa'^2}{16}$. The above formula for $\kappa' $ enables us to find that
\begin{align}
\label{La}
 \Lambda   =  \ln \frac{\mu d_1}{16} + \frac{d_2}{ d_1} \mu  +  \frac{2 d_3 d_1 - d_2^2}{2 d_1^2 }\mu^2 + \ldots .
\end{align}
The following expansions near $\kappa = 1$ ($\kappa' =0$) for $K
(\kappa)$ and $E (\kappa)$ can be found in Cayley \cite{Cayley}
\begin{equation}
\label{3.16}
K (\kappa) = -\frac{1}{2}\Lambda \Big[1 + \frac{1}{4}
\kappa'^2 + \frac{9}{64} (\kappa'^2)^2 + \frac{25}{256}
(\kappa'^2)^3 + \ldots \Big] - \frac{1}{4} \kappa'^2 -
\frac{21}{128} (\kappa'^2)^2 - \frac{185}{1536} (\kappa'^2)^3 +
\ldots
\end{equation}

\begin{equation}
\label{3.17}
E (\kappa) = -\Lambda \Big[\frac{1}{4} \kappa'^2 +
\frac{3}{32} (\kappa'^2)^2 + \frac{15}{256} (\kappa'^2)^3 + \ldots
\Big] + 1 - \frac{1}{4} \kappa'^2 - \frac{13}{64} (\kappa'^2)^2 -
\frac{9}{64} (\kappa'^2)^3 + \ldots
\end{equation}
Since $\kappa^2 < n < 1$ (positive circular case), the elliptic
integral $\Pi (n, \kappa)$ reduces to Heuman's Lambda function $
\Lambda_0  $ (see Byrd \& Friedman \cite{ByrdFriedman}, p. 227)
\begin{equation}
\label{3.18}
\Pi (n, \kappa) = K (\kappa) + \frac{\pi}{2}
\sqrt{\frac{n}{(n - \kappa^2)(1-n)}} \left(1 - \Lambda_0 (\theta, \kappa) \right).
\end{equation}
Then,
$$
g_3 \Pi (n, \kappa) = -\frac{j_2 ^2}{z_3 \sqrt{z_3-z_1}} \mu^2 -\pi | j_2 | \mu + \pi | j_2 | \mu \Lambda_0 (\theta, \kappa).
$$
The function $\Lambda_0 (\theta, \kappa)$ is expressed via incomplete elliptic integrals $F (\theta,
\kappa)$ and $E (\theta, \kappa)$ as
\begin{equation}
\label{3.19}
\Lambda_0 (\theta, \kappa) = \frac{2}{\pi} \Big[K
(\kappa) E (\theta, \kappa') - (K (\kappa) - E (\kappa)) F
(\theta, \kappa')  \Big]
\end{equation}
with
\begin{equation}
\label{3.20}
\theta = \arcsin \sqrt{\frac{1 - n}{(\kappa')^2}}.
\end{equation}

Taking into account the expansions of $K$ and $E$, we obtain for
the Heuman's Lambda function (see  also \cite{A})
\begin{align}
\label{3.21}
\Lambda_0 (\theta, \kappa) &= \frac{2}{\pi} \theta -
\sin \theta \cos \theta \Big[\frac{\kappa'^2}{2\pi} +
\left(\frac{13}{32 \pi} + \frac{3 \sin^2 \theta}{16 \pi}  \right)
(\kappa'^2)^2 + \ldots \Big] \\ \nonumber & - \Lambda \sin \theta
\cos \theta \Big[\frac{\kappa'^2}{2\pi} + \left( \frac{3}{16 \pi}
+ \frac{ \sin^2 \theta}{8 \pi} \right) (\kappa'^2)^2 + \ldots
\Big].
\end{align}
The leading term in the expansion of $2 \theta$ is
$$
2 \theta = 2 \arcsin \sqrt{\frac{\sqrt{h^2 + j_2^2}-h}{2 \sqrt{h^2 +  j_2^2}}  }  +\ldots
$$
can be transformed using the relation \cite{DI}
$$
2 \arcsin \delta = \arctan \gamma \, \Rightarrow \, \gamma = 2 \frac{\sqrt{1-\delta^2} \delta}{1 - 2 \delta^2}
$$
into the form
\begin{equation}
\label{3.22}
2 \theta = \arctan \frac{| j_2 |}{h} +\ldots .
\end{equation}
Finally, combining all expansions given above we get the desired result.

\hfill $\square$

{\bf Remark 2.} As it is noticed by Dullin \cite{Dullin} the
coefficient $\mathcal{J} (h, j_2)$ in front of the logarithm is
also a complete elliptic integral, but along the $\alpha$ - cycle
(imaginary cycle), which is vanishing as $h, j_2 \to 0$. To
calculate it,  the residue theorem can be used. Recall that the $\alpha$ - cycle
encircles the interval $[z_1, z_2]$ along which $w^2 \leq 0$, twice at that. Then
$$
\oint_{\alpha} \frac{w}{z} d z = 2 \pi i \, res_{z=0} \, \frac{w}{z} .
$$
Since only the values of the integrand $\frac{w}{z}$ close to $(h,
j_2)=(0,0)$ matter, we formally scale $h \to \mu h, j_2 \to \mu
j_2$ and find the expansion of $\frac{w}{z}$ with respect to $\mu
<< 1$. As a result we get
$$
res_{z=0} \frac{w}{z} = \frac{1}{\sqrt{2}} \Big[h \mu - \frac{1}{8} (3h^2 + j_2^2) \mu^2 + \frac{h}{64}(35 h^2 + 15 j_2^2 ) \mu^3
 - \frac{1}{1024}(1155 h^4 + 630 h^2 j_2^2 + 35 j_2^4) \mu ^4 \Big] + \ldots .
$$
After putting as before $\mu = 1$ and keeping in mind that $h$ and
$j_2$ are small, we obtain exactly the inverse of the Birkhoff normal form (\ref{invnf})
\begin{align}
\label{3.23}
\mathcal{J} (h, j_2) &=  h - \frac{(3 h^2 + j_2 ^2)}{8} + \frac{h(35 h^2 + 15 j_2^2)}{64} \\
 &- \frac{1155 h^4 + 630 h^2 j_2^2 + 35 j_2^4}{1024} + \ldots := \frac{\sqrt{2}}{2 \pi i} \oint_{\alpha} \frac{w}{z} d z . \nonumber
\end{align}

\section{Semi-global symplectic invariant}

In our case the critical value of the simple focus-focus point is $(h, j_2) = (0, 0)$. By our assumptions, the singular fibre
$F^{-1} (0)$ contains only one critical point $m$ and the component of $F^{-1} (0)$ containing $m$ is compact, see Fig. 3.
In a neighborhood of the focus-focus point, we can consider a momentum map $J = (J_1, J_2)$ consisting of momenta $J_1$ and $J_2$
of the quadratic normal form. In this neighborhood, we may assume that $H$ and the second integral (in this case $J_2$) are functions of
$J_1$ and $J_2$. This is a general result due to Eliasson \cite{Eliasson} and we demonstrated it in practice by constructing the
normal form $\overline{H}$ via near-identity transformations. So, in this neighborhood we can use $\overline{H}$ instead $H$. Let the point
$j= (j_1, j_2) \in \mathbb{R}^2$ be the image of the momentum map $J = (J_1, J_2)$. It can be identified by the complex number
$\hat{j} = j_1 + i j_2$. Therefore, we can use $(j_1, j_2)$ as coordinates instead $(h, j_2)$ in the considered neighborhood of $(0, 0)$.

According to \cite{Ngoc} the non-trivial action near a non-degenerate focus - focus point can be written as follows
\begin{equation}
\label{5.1}
2 \pi I_1 (j_1, j_2) = 2 \pi I_{10} - \mathrm{Re} (\hat{j} \ln \hat{j} - \hat{j}) + S (j_1, j_2),
\end{equation}
where $I_{10}$ is a constant and $S (j_1, j_2)$ is the semi-global symplectic invariant.
In our case we have
\begin{theorem}
\label{Th4}
The nontrivial action $I_1$ of the champagne bottle near focus-focus point (0, 0) is given by
$$
2 \pi I_1 = \frac{4}{3} - \pi | j2 | + j_2 arg \hat{j} - j_1 \ln | \hat{j} | + j_1 + S (j_1, j_2),
$$
where
\begin{equation}
\label{5.2}
S (j_1, j_2) = j_1 \ln 16 + \frac{1}{16} (17 j_1 ^2 + 3 j_2 ^2) - \frac{j_1}{128} (125 j_1 ^2 + 43 j_2 ^2) + \ldots .
\end{equation}
\end{theorem}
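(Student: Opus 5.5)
Starting from Lemma \ref{lem2}, I would change variables from $(h,j_2)$ to $(j_1,j_2)$ using the Birkhoff normal form (\ref{nf}), i.e. substitute $h=\overline{H}(j_1,j_2)$, and then read off $S$ by matching with (\ref{5.1}). The point that makes this work is Remark 2: the coefficient $\mathcal{J}(h,j_2)$ of the logarithm in (\ref{3.13}) coincides with the inverse normal form (\ref{invnf}). Hence $\mathcal{J}(\overline{H}(j_1,j_2),j_2)=j_1$ identically, and in the new coordinates the logarithmic term is $j_1\ln 16 - j_1\ln\sqrt{h^2+j_2^2}$.

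The remaining work is to expand each piece of (\ref{3.13}) in the new variables.

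\begin{enumerate}
\item The logarithm. With $h=j_1+\frac18(3j_1^2+j_2^2)+\dots$, write $h^2+j_2^2=|\hat j|^2(1+\rho)$, where $\rho=\frac{2j_1(h-j_1)+(h-j_1)^2}{|\hat j|^2}$. Then
$\ln\sqrt{h^2+j_2^2}=\ln|\hat j|+\tfrac12\ln(1+\rho)$. The correction $-\tfrac12 j_1\ln(1+\rho)$ is homogeneous of degree $\ge 2$ in $j$, but it is only a rational function of $(j_1,j_2)$ at each order.
\item The angle. Likewise $j_2\arctan(j_2/h)=j_2\arg\hat j+j_2\bigl(\arctan(j_2/h)-\arctan(j_2/j_1)\bigr)$. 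The bracket is again a rational correction, of the form $-\frac{j_2(h-j_1)}{|\hat j|^2}+\dots$.
\item The polynomial part. The term $h+\frac{17h^2+3j_2^2}{16}$ becomes $j_1+\frac{17j_1^2+3j_2^2}{16}+\frac18(3j_1^2+j_2^2)+\dots$. The rational term of order three, $-\frac{h(\dots)}{64(h^2+j_2^2)^2}$, also has to be re-expanded.
\item Collection. Gather everything by homogeneous degree in $(j_1,j_2)$. The constant $\frac43-\pi|j_2|$ is carried over unchanged.
\end{enumerate}

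Collecting these gives $j_2\arg\hat j-j_1\ln|\hat j|+j_1$, which is exactly $-\mathrm{Re}(\hat j\ln\hat j-\hat j)$. The residual terms define $S$.

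The key check, and the main obstacle, is that at each order all the non-polynomial pieces combine into a polynomial: the denominators $|\hat j|^2$, $|\hat j|^4$ produced by the logarithm, the arctan, and the rational terms of (\ref{3.13}) must cancel. This must happen, because Theorem \ref{Th1} guarantees that $S$ is smooth. I would therefore verify it order by order, by clearing denominators with computer algebra, which is how the paper carries out the normal form computation.

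At second order I expect the following. The logarithmic correction $-\tfrac12 j_1\cdot 2j_1(h-j_1)/|\hat j|^2$ and the arctan correction $-j_2^2(h-j_1)/|\hat j|^2$ add to $-(h-j_1)$, with the denominators cancelling. This term cancels the $+\frac18(3j_1^2+j_2^2)$ produced by substituting $h$ into the linear term $h$, leaving $\frac{1}{16}(17j_1^2+3j_2^2)$ as claimed.

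The cubic coefficient $-\frac{j_1}{128}(125j_1^2+43j_2^2)$ comes from the same bookkeeping carried out one order higher. Throughout, the care needed is in tracking the second-order terms of $h-j_1$ inside $\rho$ and inside the arctan difference.
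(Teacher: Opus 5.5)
Your proposal is correct and is the paper's own argument: substitute $h=\overline{H}(j_1,j_2)$ into Lemma~\ref{lem2} and use $\mathcal{J}(\overline{H}(j_1,j_2),j_2)=j_1$. You make explicit the bookkeeping the paper leaves implicit, and your second-order cancellation checks out. One caution: if you run your denominator check on Lemma~\ref{lem2} verbatim, it fails at cubic order. There the net contribution of your logarithm and arctan corrections is $\frac{j_1(3j_1^2+j_2^2)^2}{128|\hat j|^2}$, and combined with the rational term of (\ref{3.13}) this is a polynomial only if the printed coefficient $125h^4j_2^2$ is $273h^4j_2^2$ (forced by smoothness of $S$, so apparently a misprint); with $273$ you recover exactly the stated $-\frac{j_1}{128}(125j_1^2+43j_2^2)$.
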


\noindent
{\bf Proof.} We only need to substitute the normal form  (\ref{nf}) into the expansion of the action $I_1$ (\ref{3.13}) having in mind that
$\mathcal{J} (h (j_1, j_2), j_2) = j_1$.

\hfill $\square$

Next, we use the obtained expansions to calculate the important dynamical quantities, such as the period $T$ and the rotation number $W$.

The period of the reduced system with respect to the $S^1$-action induced by $J_2$ is defined as
\begin{equation}
\label{5.3}
T (h, j_2) = 2 \pi \frac{\partial I_1}{\partial h} (h, j_2).
\end{equation}
The period can also be viewed as a function of $j_1, j_2$ near the focus-focus point: $T (j_1,j_2) = T (\overline{H} (j_1, j_2),j_2)$.
Then, the result of Theorem \ref{Th4} yields
$$
T (j_1, j_2) =  2 \pi \frac{\partial I_1 / \partial j_1}{\partial \overline{H} / \partial j_1}
= \frac{- \ln | \hat{j} | + S_1}{\partial \bar{H} / \partial j_1 }
= \frac{ \ln \frac{16}{ | \hat{j} | } + \frac{17}{8} j_1 + O(2) }{1 + \frac{3}{4} j_1 + O(2)},
$$
where $S_k = \partial S / \partial j_k, k = 1, 2$. As expected, the period goes to infinity as $\hat{j} \to 0$, i.e.
when approaching the focus-focus point. More detailed expansion can be obtained merely by di\-fferentiating
(\ref{3.13}) in Lemma \ref{lem2} with respect to $h$.

The rotation number is defined as follows
\begin{equation}
\label{5.4}
W (h, j_2) = - \frac{\partial I_1}{\partial j_2} (h, j_2).
\end{equation}
Clearly, the easiest way to get it is to differentiate (\ref{3.13}) in Lemma \ref{lem2} with respect to $j_2$.
Alternatively, we consider $W$  as a function of $j_1, j_2$. Then, \cite{Dullin}
$$
2 \pi W (j_1, j_2) = -2 \pi \frac{\partial I_1}{\partial j_2} (j_1, j_2) = \pi sgn (j_2) - arg \hat{j} - A \ln | \hat{j} | + A S_1 - S_2,
$$
where $A = A (j) := \partial_{j_2} \overline{H} / \partial_{j_1} \overline{H}$, which  expansion reads as follows
$$
A (j) = \frac{j_2}{4} - \frac{15 j_1 j_2}{32} + \frac{j_2 (135 j_1 ^2 + 10 j_2 ^2)}{128} + \ldots .
$$
From here $W (j_1, j_2)$ can be easily found to be
\begin{align}
\label{5.5}
2 \pi W (j_1, j_2) = \pi sgn (j_2) - arctan \frac{j_2}{j_1} & + \ln \frac{16}{\sqrt{j_1 ^2 + j_2 ^2}}
\left( \frac{j_2}{4} - \frac{15}{32} j_1 j_2 + \frac{j_2}{128} (135 j_1 ^2 +10 j_2 ^2) + \ldots \right) \nonumber \\
& - \frac{3}{8} j_2 + \frac{77}{64} j_1 j_2 - \frac{j_2}{512} (885 j_1 ^2 + 43 j_2 ^2) + \ldots .
\end{align}
Notice that $W$ is odd with respect to $j_2$.

Another important dynamical quantity is the twist $\mathcal{T}$, which is defined as
\begin{equation}
\label{5.6}
\mathcal{T} (h, j_2) =  \frac{\partial W}{\partial j_2} (h, j_2)
\end{equation}
for constant $h$. $\mathcal{T}$ is even with respect to $j_2$. The twist condition $\partial W /\partial j_2 \neq 0$ is equivalent to the iso-energetic non-degeneracy condition
(or Arnold-Moser condition), which ensures the survival of invariant tori under small perturbations for the same energy.
Using the developed above technique Dullin and V\~{u} Ng\d{o}c \cite{DSVN} showed that when the focus-focus point is loxodromic,
there exist tori with vanishing twist for each value of $h$ close to the critical one. In our case the eigenvalues are $\pm \sqrt{2}$,
see Section 3. Nevertheless, the result remains the same.

From \cite{DSVN,Dullin} we have a representation of $\mathcal{T}$
\begin{equation}
\label{5.7}
\mathcal{T} (j_1, j_2) = -A (j) \frac{\partial W}{\partial j_1} + \frac{\partial W}{\partial j_2}.
\end{equation}
Straightforward calculations give that
$$
2 \pi \mathcal{T} (j) = \frac{j_1 (A (j) ^2 -1)}{j_1 ^2 + j_2 ^2} - \frac{2 j_2 A (j)}{j_1 ^2 + j_2 ^2} -\frac{3}{8} + \frac{77}{64}j_1 + O(2)
+ \ln \frac{16}{\sqrt{j_1 ^2 + j_2 ^2}} \left(\frac{1}{4} - \frac{15}{32}j_1 + O(2) \right).
$$
After introducing the mapping $\tilde{\mathcal{T}} (j) = | j |^2 2 \pi \mathcal{T} (j)$ the above expression becomes
$$
\tilde{\mathcal{T}} (j) = (A^2 (j) -1) j_1 - 2 A (j) j_2 + O (| j |^2 \ln | j |).
$$
Since $| j |^2 \ln | j | $ is $C^1$ at the origin, it follows that
$\tilde{\mathcal{T}} (0) = 0, \partial \tilde{\mathcal{T}} / \partial j_1 (0) = -1, \partial \tilde{\mathcal{T}} / \partial j_2 (0) = 0$.
Then $\mathcal{T}^{-1} (0)$ is a $C^1$ curve through origin with a tangent at origin  $j_1 = 0$. In a neighborhood of the origin $j_1 \sim h$,
hence we recover the shape of the curve along which the twist condition is violated near the origin $(h, j_2) = (0, 0)$, see Fig. 1.
There is no need to comment on the Kolmogorov condition, because it is verified in several ways \cite{GG,Rink,DSVN,Zung1}.

\section{Concluding Remarks}

This paper is inspired by the work of Dullin \cite{Dullin}, where he gave the first example of explicit computation of the
semi-global symplectic invariant near focus-focus singularity of the sphe\-rical pendulum.
The spherical pendulum and the system studied here share many common features, for instance, they are generalized semi-toric
systems in the terminology of Pelayo et al. \cite{PRNg}.
Here we make use of Dullin's ideas to calculate the semi-global symplectic invariant for the champagne bottle.
Prior to that we have calculated the Birkhoff normal form
and the non-trivial action near the focus-focus point and after that certain important dynamical quantities.

In a view of the classification of the integrable systems we
compare our result (\ref{5.2}) with the semi-global symplectic invariant of the spherical pendulum
\begin{equation}
\label{6.1}
S_{sp} (j_1, j_2) = j_1 \ln 32 + \frac{3 (j_1^2 + 3 j_2^2)}{32} - \frac{j_1(5 j_1 ^2 + 51 j_2 ^2)} {512}+ \frac{
(55 j_1 ^4 +1230 j_1 ^2 j_2 ^2 + 271 j_2 ^4)} {32 768}+ \ldots .
\end{equation}
Apparently they do not coincide, so we can conclude that the spherical pendulum and the champagne bottle are not symplectically equivalent.

\vspace{3ex}

{\bf Acknowledgements}
\noindent
The work is partially supported by the Research Fund of Sofia
University "St. Kliment Ohridski" under Contract 80-10-30/21.05.2025.

\end{document}